\documentclass[letterpaper, 10 pt, conference]{ieeeconf}  

\IEEEoverridecommandlockouts                              

\usepackage{graphics} 
\usepackage{epsfig} 
\usepackage{times} 
\usepackage{amsmath} 
\usepackage{amssymb}  

\usepackage{mathtools}
\usepackage{mathrsfs} 

\usepackage{cite}

\newtheorem{theorem}{Theorem}[section]

\newtheorem{proposition}[theorem]{Proposition}
\newtheorem{lemma}[theorem]{Lemma}

\newtheorem{problem}{Problem}[section]

\newtheorem{definition}{Definition}

\newcommand{\PoA}{{\rm{PoA}}}
\newcommand{\NE}{{\rm{NE}}}
\newcommand{\Ne}{{\rm{ne}}}
\newcommand{\WNE}{{\rm{WNE}}}

\newcommand{\opt}{{\rm{opt}}}
\newcommand{\aaa}{{\mathcal{A}}}

\newcommand{\leftparen}{\left(}
\newcommand{\leftcurly}{\left\{}
\newcommand{\leftbracket}{\left[}

\newcommand{\rightparen}{\right)}
\newcommand{\rightcurly}{\right\}}
\newcommand{\rightbracket}{\right]}

\title{\LARGE \bf On the Fragility of Worst-Case Nash Equilibria in Atomic Congestion Games
}

\author{Colton Hill, Brandon Collins, and Philip N. Brown
\thanks{*Research was sponsored by the Air Force Office of Scientific Research under award number FA9550-23-1-0171, by the National Science Foundation under grant number ECCS-2440836, and by NASA under grant number 80NSSC25M7102.}
\thanks{C. Hill and P. N. Brown are with the University of Colorado at Colorado Springs, CO 80918, USA.
{\tt\small \{chill13, pbrown2\}@uccs.edu.}
P. N. Brown completed this work while a Visiting Professor at Politecnico di Torino, Turin, Italy.
B. Collins is with the Thayer School of Engineering, Dartmouth College, Hanover, NH 03755, USA.
{\tt\small Brandon.C.Collins@dartmouth.edu.}}%
}

\begin{document}

\maketitle
\thispagestyle{empty}
\pagestyle{empty}

\begin{abstract}

Aggregate performance in smart mobility systems depends heavily on the emergent behavior of selfish, resource-sharing agents that participate within the systems.
As a result, recent work has focused on how a system designer can leverage incentives to influence behavior so that system cost (e.g., traffic congestion) is minimized.
These results show that worst-case equilibria can be quite inefficient compared to system-optimal allocations.
However, it is unclear to what extent agents are ``satisfied'' with their decisions in these worst-case scenarios.
We demonstrate that in any incentivized atomic congestion game, agents' aggregate satisfaction at equilibrium (relative to their actions in an optimal allocation) is correlated with the efficiency of the corresponding system cost, in the sense that if agents are very satisfied with their equilibrium choices, the equilibrium must be relatively efficient.
Further, we show that worst-case Nash equilibria are fragile, as every agent is indifferent between their action in a worst-case equilibrium and their action in a system-optimal allocation.
In summary, at equilibrium, either agents are highly satisfied with their decisions or their decisions are highly inefficient, but both cannot be true simultaneously.
This work adds to recent results for other classes of games which indicate that worst-case equilibrium efficiency guarantees only occur when agents are indifferent about their decisions.

\end{abstract}

\section{Introduction}

Modern social systems increasingly depend on a shared infrastructure among agents, such as transportation and communication networks, and electric power grids~\cite{rinaldi2001identifying}.
The aggregate performance of these systems is often shaped by the decisions of self-interested agents whose behavior can harm system efficiency~\cite{roughgardenSelfishRoutingPrice2005}.
An approach for quantifying system inefficiency due to selfish agents that has received attention in recent literature is the \emph{price of anarchy} (PoA)~\cite{pigou2017economics, christodoulou2005price, paccagnan2021optimal}.
The PoA is defined as the worst-case ratio between the system cost induced by agent behavior at equilibrium and the system cost when agents are allocated optimally.
Bounds on the PoA when agents are uninfluenced are well-known, and provide a system designer with efficiency guarantees for a congestible network in the presence of selfish agents~\cite{roughgarden2002bad, awerbuch2005price, perakis2007price, hoeksma2011price, roughgardenIntrinsicRobustnessPrice2015}.

Centralized control can reduce the PoA, but is often infeasible, so a key challenge is whether a system designer can design incentives in a decentralized manner that provide aggregate efficiency guarantees.
This has been studied broadly in recent literature on mechanism design and learning in congestion games, including parameter estimation~\cite{panageas2023semi}, model learning in resource-allocation games~\cite{dong2023taming}, and incentive mechanisms for improving the PoA~\cite{fotakis2010stackelberg}.
Further, tolling mechanisms have become a common tool for influencing agent behavior and reducing network inefficiency~\cite{bilo2019dynamic, ferguson2021effectiveness}.

The design of tolling mechanisms focuses on taxing resources that are sensitive to the number of agents utilizing the resource (e.g., by applying a toll to a congestible roadway).
Tolls have been studied in a variety of settings, including adaptive toll design and robustness to misspecified tolls in congestion games~\cite{chiu2024parameter, chiu2025robustness}.
Given sufficient knowledge of the system and agents' preferences, a system designer can design optimal tolls within a class of mechanisms such that no other toll in the class provides greater improvement in the PoA~\cite{bilo2019dynamic, caragiannis2010taxes}.
Moreover, \emph{local} tolls, which are determined only by resource-specific information, provide performance guarantees similar to those of globally optimal tolls~\cite{paccagnan2021optimal}.

Optimal tolling mechanisms improve the PoA relative to when no tolls are applied, but they do not eliminate inefficiency altogether, as equilibrium outcomes under optimal tolls can still be far from optimal~\cite{bilo2019dynamic, caragiannis2010taxes}.
However, recent work suggests that there is a connection between aggregate agent ``satisfaction'' and system efficiency~\cite{seaton2023intrinsic}.
Agent satisfaction can be parameterized by the \emph{stability margin}, defined as the ratio between the aggregate change in cost agents obtain by unilaterally switching from an equilibrium to a system-optimal allocation, and the system cost at equilibrium.
As a result, the stability margin captures the aggregate satisfaction agents have with their decisions at equilibrium relative to an optimal allocation.

For resource-allocation games in which payoffs are maximized, a linear program (LP) can be used to show that the PoA improves as a function of the stability margin and that, in games attaining worst-case price-of-anarchy bounds, agents are indifferent between their equilibrium actions and their actions in an optimal outcome~\cite{singh2025worst}.
This is closely related to the methods in~\cite{paccagnan2019utility, paccagnan2021optimal}, which compute the exact PoA in resource-allocation games, and can be specified for atomic congestion games with local tolls, but do not account for the stability margin.
Thus, prior to this work, it was unknown how the stability margin impacts the PoA in atomic congestion games when tolls are applied.

The main contribution of this paper is to extend the results of~\cite{singh2025worst} (which characterizes the tradeoff between the PoA and the stability margin in networked resource allocation games) to study the tradeoff between the PoA and the stability margin in atomic congestion games under local tolling mechanisms (which we refer to as \emph{incentivized} atomic congestion games).
When no toll is applied, a nonincreasing upper bound for the PoA exists as a function of the stability margin, and can be found via a $(\lambda, \mu)$-smoothness argument~\cite{seaton2023intrinsic}; however, the bound is not tight, and cannot be applied when tolls are used.
In this work, we study how the stability margin impacts the PoA under various local tolling mechanisms.
Our contributions are summarized as follows:

\begin{enumerate}
    \item First, Theorem~\ref{thm:LP_Payoff_Stability} presents an LP that bounds the PoA as a function of the stability margin for incentivized atomic congestion games.
    While it is known that the PoA \emph{weakly} improves in the stability margin, this LP provides a novel tool to investigate the problem space and verify that the PoA \emph{strictly} improves as a function of the stability margin for specific parameterizations.
    
    \item Then, Theorem~\ref{thm:Knife_Edge} shows that in incentivized atomic congestion games attaining the worst-case PoA regardless of stability margin, agents are indifferent between their equilibrium and optimal-allocation actions.
    
    \item Finally, Proposition~\ref{prop:PoA_S_LUB} shows that there exist classes of atomic congestion games without tolls such that the LP in Theorem~\ref{thm:LP_Payoff_Stability} yields a PoA bound that is strictly tighter than the previously best-known bound in~\cite{seaton2023intrinsic}.
\end{enumerate}
A high-level summary of these results is given in Fig.~\ref{fig:Main_Results}.

\begin{figure}[t]
    \vspace{2mm}
        \begin{center}
            \includegraphics[width=0.48\textwidth]{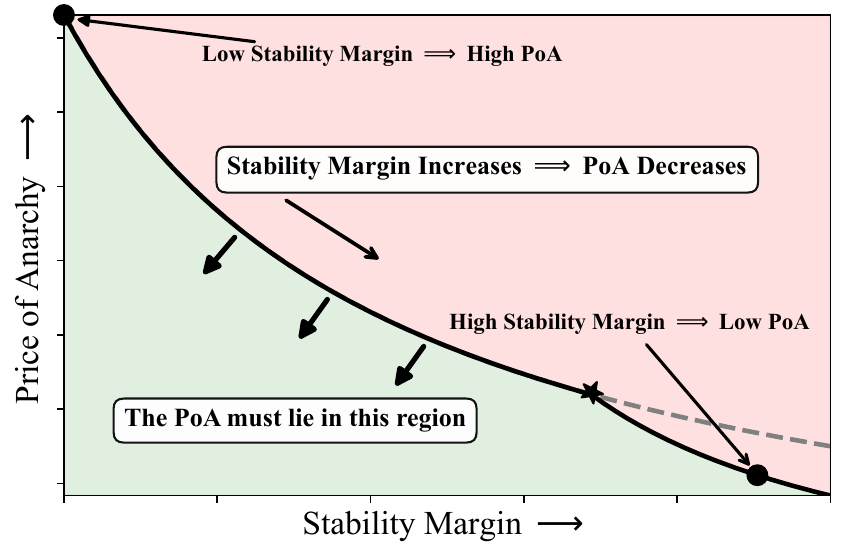}
        \end{center}
        \caption{The curve separating the red (upper-right) and green (lower-left) regions is the PoA bound parameterized by the stability margin, and is generated by our novel LP from Theorem~\ref{thm:LP_Payoff_Stability}.
        The figure shows the bound for atomic congestion games with cubic local cost functions and no tolls.
        The dots in the upper-left and lower-right corners illustrate that agents are either indifferent at inefficient equilibria (Theorem~\ref{thm:Knife_Edge}) or highly satisfied at efficient equilibria.
        The curve to the right of the star ($\star$) represents Proposition~\ref{prop:PoA_S_LUB}, which shows that the PoA bound produced by the LP strictly improves upon previously known results for specific parameterizations.
        }
        \label{fig:Main_Results}
    \vspace{-4mm}
\end{figure}

\section{Model}
We study an atomic congestion game where a set of agents \mbox{$[n] \coloneqq \{1, \ldots, n\}$} share a finite resource set~$E$.
Agent \mbox{$i \in [n]$} has a feasible action set, denoted \mbox{$\aaa_i \subseteq 2^E$}, and chooses an action from the set, \mbox{$a_i \in \aaa_i$}.
Let \mbox{$\aaa \coloneqq \aaa_1 \times \dots \times \aaa_n$} denote the joint action space, and let \mbox{$a \coloneqq (a_1, \ldots, a_n) \in \aaa$} denote an allocation of resources to agents.
For any $e \in E$ and $a \in \aaa$, denote the number of agents selecting resource $e$ in allocation~$a$ as $\vert a \vert_e \coloneqq \vert \{i \in [n] : e \in a_i \} \vert$.

Each $e \in E$ has a local cost function $\ell_e : \mathbb{N} \to \mathbb{R}_{\geq 0}$, where $\ell_e(|a|_e)$ is the cost incurred by each agent selecting resource $e$ under allocation~$a$.
We assume each resource cost function lies in $\rm{span}(\mathcal{B})$, where \mbox{$\mathcal{B} \coloneqq \{b_j : \mathbb{N} \to \mathbb{R}_{\geq 0}\}_{j \in [m]}$} denotes the set of nonnegative, nondecreasing and convex basis functions.
Here, convexity implies that \mbox{$b_j(x+1) - b_j(x)$} is nondecreasing in $x$ for all $j \in [m]$.
Thus, for every $e \in E$,
\begin{equation}
    \label{eq:Resource_Cost}
    \ell_e(\cdot) \coloneqq \sum_{j \in [m]} \alpha_{e,j} b_j(\cdot),
\end{equation}
where $\alpha_{e,j} \geq 0$ for all $j \in [m]$.
Then, the system cost is the cost incurred by all agents for a given allocation:
\begin{equation}
    \label{eq:System_Cost}
    L(a) \coloneqq \sum_{e \in E} \vert a \vert_e \ell_e(\vert a \vert_e).
\end{equation}
Intuitively, system cost represents agents' total transit time in person-hours.
To avoid trivialities, we assume~\eqref{eq:System_Cost} is strictly positive.
An instance of an atomic congestion game is then fully specified by the tuple \mbox{$G \coloneqq \leftparen [n], E, \aaa, \{\ell_e\}_{e \in E}\rightparen$}; we write $\mathcal{G}_\mathcal{B}^n$ to be the class of atomic congestion games with $n$ agents and basis functions belonging to $\mathcal{B}$.

\subsection{Tolling Mechanisms}
For a game \mbox{$G \in \mathcal{G}_{\mathcal{B}}^n$}, a \emph{local} tolling mechanism $T$ assigns each resource \mbox{$e \in E$} a toll function \mbox{$\tau_e \coloneqq T(\ell_e) : \mathbb{N} \to \mathbb{R}_{\geq 0}$} that depends only on its cost function~$\ell_e$.
A local tolling mechanism is \emph{linear} if \mbox{$T(x\ell + y\ell^\prime) = xT(\ell) + yT(\ell^\prime)$}.
Linearity does \emph{not} imply tolls are affine in the congestion level (i.e., it need \emph{not} be of the form \mbox{$\tau_e(x) = a_e x + b_e$} for $a_e, b_e \in \mathbb{R}$).
Hence, for game $G$ (whose resource cost functions are spanned by $\mathcal{B}$), any linear local tolling mechanism is determined by the toll basis functions \mbox{$\leftcurly \tau_j^\star \coloneqq T(b_j) : \mathbb{N} \to \mathbb{R}_{\geq 0} \rightcurly_{j \in [m]}$}.
Thus, for each \mbox{$e \in E$} where \mbox{$\ell_e(\cdot)$} is given by~\eqref{eq:Resource_Cost}, the resulting linear local toll is
\begin{equation}
    \label{eq:Toll_cost}
    \tau_e(\cdot) = \sum_{j \in [m]} \alpha_{e,j}\tau_j^\star(\cdot).
\end{equation}

We assume agents are selfish, and that a system designer may impose tolls on resources in $E$ to reduce the system cost in~\eqref{eq:System_Cost}.
For \mbox{$a \in \aaa$}, the cost experienced by agent \mbox{$i \in [n]$} is the sum of the local costs and tolls of their selected resources:
\begin{equation} 
    \label{eq:Agent_Cost}
    J_i(a) \coloneqq \sum_{e \in a_i} \leftparen \ell_e(\vert a \vert_e) + \tau_e(\vert a \vert_e) \rightparen.
\end{equation}
Local tolls (which depend only on information about the resource to which they are applied) are known to provide performance guarantees similar to those of global tolls (which require system-wide information)~\cite{paccagnan2021optimal}.
We focus our study on linear local tolls, since they are optimal within the class of local tolling mechanisms~\cite{paccagnan2022utility}.
We then specify an instance of an incentivized atomic congestion game with the pair~$(G,T)$.

\subsection{Nash Equilibrium}
For agent $i \in [n]$ and $a, a^\prime \in \aaa$, we write $(a^\prime_i, a_{-i})$ to denote the allocation that arises when agent $i$ unilaterally deviates from $a$ to $a^\prime$.
Now, agents choose actions that minimize their experienced costs defined in~\eqref{eq:Agent_Cost}.
This behavior is formalized by a \emph{pure Nash equilibrium}, which is guaranteed to exist for any incentivized atomic congestion game~\cite{rosenthal1973class}.

\begin{definition}
    \label{def:Nash_equilibrium}
    For an incentivized atomic congestion game $(G, T)$, $a^\Ne \in \aaa$ is a pure Nash equilibrium if for any $i \in [n]$ and any $a_i \in \aaa_i$, the following holds:
    \begin{equation}
        \label{eq:Nash_equilibrium}
        J_i(a_i^\Ne, a_{-i}^\Ne) \leq J_i(a_i, a_{-i}^\Ne).
    \end{equation}
\end{definition}
\vspace{2mm}

For game $(G,T)$, we write $\aaa^\NE(G,T) \subseteq \aaa$ to denote the set of all pure Nash equilibria, and often write $\aaa^\NE(G,T)$ simply as $\aaa^\NE$ when the context is clear.

\subsection{Performance Metrics and Research Goal}

The effectiveness of a tolling mechanism $T$ can be measured by the \emph{price of anarchy} (PoA), defined as the ratio of the highest possible system cost at Nash equilibrium relative to the system cost of an optimal allocation, taken in the worst-case over a class of games:
\begin{equation}
    \label{eq:PoA}
    \PoA \leftparen \mathcal{G}_\mathcal{B}^n, T \rightparen \coloneqq \sup_{G \in \mathcal{G}_\mathcal{B}^n} \leftparen \frac{\max_{a \in \aaa^\NE \leftparen G, T \rightparen} L(a)}{\min_{a^\ast \in \aaa} L(a^\ast)} \rightparen.
\end{equation}
Notice that \mbox{$\PoA \leftparen \mathcal{G}_\mathcal{B}^n, T \rightparen \geq 1$}.
If $\PoA \leftparen \mathcal{G}_\mathcal{B}^n, T \rightparen = 1$, then the tolls imposed by mechanism $T$ induce optimal allocations, and as $\PoA \leftparen \mathcal{G}_\mathcal{B}^n, T \rightparen$ increases, the resulting Nash equilibria can produce suboptimal allocations.

Recent work~\cite{seaton2023intrinsic} establishes a tradeoff between how ``satisfied'' agents are with their decisions at a Nash equilibrium (relative to their decisions at a system optimal allocation) and the quality of the system cost at that equilibrium.
For incentivized game $(G,T)$, let \mbox{$\aaa^\WNE \coloneqq \arg\max_{a \in \aaa^\NE} L(a)$} denote the set of worst-case Nash equilibria, and let \mbox{$\aaa^\opt \coloneqq \arg\min_{a \in \aaa} L(a)$} denote the set of allocations that minimize system cost.
Then, aggregate agent satisfaction can be measured with the \emph{stability margin}, defined as the aggregate worst-case difference between each agent's cost if they unilaterally deviate to an optimal action from an equilibrium action, relative to the system cost at equilibrium:
\begin{equation}
    \label{eq:Stability_Margin}
    \mathcal{S} (G, T) \coloneqq\hspace{-.75mm} \max_{a \in \aaa^\WNE} \max_{a^\opt \in \aaa^\opt} \frac{\sum_{i \in [n]} J_i(a^\opt_i, a_{-i}) - J_i(a)}{L(a)}.
\end{equation}

Since users minimize their individual cost at equilibrium, $\mathcal{S} (G, T) \geq 0$.
If $\mathcal{S} (G, T) = 0$, then all agents are indifferent between their equilibrium action and their action at a system optimal allocation, and as $\mathcal{S} (G, T)$ increases, agents increasingly prefer their equilibrium action.
Thus, when $\mathcal{S} (G, T)$ is small, agents require little incentive to deviate from worst-case equilibria to system optimal outcomes.

For a class of games $\mathcal{G}_{\mathcal{B}}^n$, tolling mechanism $T$, and scalar $S \geq 0$, we denote $\mathcal{G}_{\mathcal{B}}^n (S, T) \coloneqq \leftcurly G \in \mathcal{G}_{\mathcal{B}}^n \;\middle|\; \mathcal{S}(G,T)  \geq S \rightcurly$
as the class of incentivized games such that every $(G,T) \in \mathcal{G}_{\mathcal{B}}^n (S, T)$ has stability margin greater than or equal to~$S$.
Since the stability margin is nonnegative, $\mathcal{G}_{\mathcal{B}}^n$ is the class of games such that every $G \in \mathcal{G}_{\mathcal{B}}^n$ has stability margin greater than or equal to $0$, so we write $\mathcal{G}_{\mathcal{B}}^n$ as $\mathcal{G}_{\mathcal{B}}^n (0, T)$ for clarity.
We also write $\PoA(\mathcal{G}_{\mathcal{B}}^n(S,T), T)$ simply as $\PoA(\mathcal{G}_{\mathcal{B}}^n(S,T))$.

Finally, we are interested in studying the tradeoff between system-level efficiency at worst-case equilibria and the extent to which agents are satisfied with those equilibria:
\begin{problem}
    \label{prob:Main}
    Given class of games $\mathcal{G}_\mathcal{B}^n(S, T)$, identify an upper bound on $\PoA \leftparen \mathcal{G}_\mathcal{B}^n(S, T) \rightparen$ as a function of~$S$.
\end{problem}

\section{Main Results}
This section provides a principled method for solving Problem~\ref{prob:Main} and uses it to investigate the tradeoffs between the PoA and the stability margin.
Specifically:
\begin{enumerate}
    \item Theorem~\ref{thm:LP_Payoff_Stability} provides an LP that solves Problem~\ref{prob:Main}; for a given class of games $\mathcal{G}_{\mathcal{B}}^n(S,T)$, the LP produces a bound $p^\star$ satisfying \mbox{$\PoA \leftparen \mathcal{G}_{\mathcal{B}}^n(S,T) \rightparen \leq 1/p^\star$}.
    
    \item Theorem~\ref{thm:Knife_Edge} shows that if a game $(G,T) \in \mathcal{G}_{\mathcal{B}}^n(0,T)$ attains $\PoA \leftparen \mathcal{G}_{\mathcal{B}}^n(0,T) \rightparen$, then all agents are indifferent between their actions at a worst-case equilibrium and an optimal allocation, implying $\mathcal{S} (G,T) = 0$.
    
    \item Proposition~\ref{prop:PoA_S_LUB} applies the LP from Theorem~\ref{thm:LP_Payoff_Stability} to classes of games without tolls, $\mathcal{G}_{\mathcal{B}}^n(S,0)$, and yields a PoA bound that strictly improves upon known bounds for specific values of $S$.
\end{enumerate}

\subsection{Tradeoff Between Price of Anarchy and Stability Margin}

Our first main result presents a linear program (LP) that solves Problem~\ref{prob:Main}.
The LP does this by extending the technical work presented in~\cite{paccagnan2021optimal} in line with the approach presented in~\cite{singh2025worst}.
In~\cite{paccagnan2021optimal}, it is shown that for any class of atomic congestion games $\mathcal{G}_{\mathcal{B}}^n(0, T)$, the \emph{exact} price of anarchy $\PoA \leftparen \mathcal{G}_\mathcal{B}^n (0, T) \rightparen$ can be obtained from an LP defined on the reduced class of games, $\bar{\mathcal{G}}_{\mathcal{B}}^n(0, T)$, where each agent is restricted to two actions: their worst-case equilibrium action and their action in an optimal allocation.
These results are extended in~\cite{singh2025worst}, which presents an LP that generates a \emph{lower} bound on the PoA as a function of $S \geq 0$ for resource allocation games where \emph{payoffs} are \emph{maximized}.
Our work extends this result by presenting an LP that generates an \emph{upper} bound on the PoA as a function of $S \geq 0$ for resource allocation games where \emph{costs} are \emph{minimized}.

The LPs presented in this work have decision variables that are indexed by the following set:
\begin{align}
    \nonumber
    \mathcal{I} \coloneqq \Big\{ 
    &\leftparen ((x_1, y_1, z_1), \ldots, (x_n, y_n, z_n)), j \rightparen : j \in [m],
    \\ \nonumber
    &(x_i, y_i, z_i) \in \{0, 1\}^3 \quad \forall i \in [n],
    \\ \nonumber
    &0 \leq x_i + y_i + z_i \leq 1 \quad \forall i \in [n], 
    \\ \label{eq:I}
    &1 \leq \sum_{i \in [n]} (x_i + y_i + z_i) \leq n
    \Big\}.
\end{align}
For ease of notation, let $(t, j) = ((t_1, \ldots, t_n), j)$, where \mbox{$t_i = (x_i, y_i, z_i)$} for all $i \in [n]$, so $(t, j)$ is a typical element of~$\mathcal{I}$.
Intuitively, $\mathcal{I}$ is the set of all feasible ways each agent can contribute to resource usage in the equilibrium allocation only ($y$), the optimal allocation only ($z$), or both ($x$).
Thus, $\mathcal{I}$ is exponentially large in $n$.
The index $j \in [m]$ allows the LP to search over coefficients for each basis function at each optimal and worst-case equilibrium resource.
For any $t$, let
\begin{align}
    \label{eq:Y_t_and_Z_t}
    Y_t \coloneqq \sum_{i \in [n]} (x_i + y_i), 
    \text{ and }
    Z_t \coloneqq \sum_{i \in [n]} (x_i + z_i).
\end{align}
Intuitively, $Y_t$ is the aggregate number of agents using a resource at equilibrium, and $Z_t$ is the aggregate number of agents using a resource in an optimal allocation.

\begin{theorem}
    \label{thm:LP_Payoff_Stability}
    Consider class of incentivized atomic congestion games $\mathcal{G}_{\mathcal{B}}^n (S, T)$.
    Then, it holds that 
    \begin{equation}
        \label{eq:PoA_S_UB}
        \PoA \leftparen \mathcal{G}_{\mathcal{B}}^n (S, T) \rightparen \leq 1/p^\star,
    \end{equation}
    where $p^\star$ is the value of the following LP:
    \begin{subequations}
        \label{eq:LP_S}
        \begin{align}
            \label{eq:OPT_cond_S}
            p^\star = \min_{\{\theta(t, j)\}} \quad &\sum_{(t, j) \in \mathcal{I}} Z_t b_j(Z_t) \theta(t, j)
            \\ \label{eq:NE_cond_S}
            \text{s.t.}
            \quad \sum_{(t, j) \in \mathcal{I}} &\leftbracket y_if_j(Y_t) - z_if_j(Y_t+1) \rightbracket \theta(t, j) \leq 0 \; \forall i \in [n],
            \\ \label{eq:STAB_cond_S}
            \sum_{i \in [n]} \sum_{(t, j) \in \mathcal{I}} &\leftbracket y_if_j(Y_t) - z_if_j(Y_t+1) \rightbracket \theta(t, j) \leq -S,
            \\ \label{eq:NORM_cond_S}
            &\sum_{(t, j) \in \mathcal{I}} Y_t b_j(Y_t) \theta(t, j)=1
            \\ \label{eq:COEFF_cond_S}
            & \theta(t, j) \geq 0, \quad \forall (t, j) \in \mathcal{I},
        \end{align}
    \end{subequations}
    where $f_j \coloneqq b_j + \tau^\star_j : \mathbb{N} \to \mathbb{R}_{\geq 0}$, and for convenience, we define $b_j(0) = f_j(0) = f_j(n+1) = 0$ for each $j \in [m]$.
\end{theorem}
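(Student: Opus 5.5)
The plan is to take any game in the class, build from it a feasible point of the LP~\eqref{eq:LP_S} whose objective equals the game's optimal-to-equilibrium cost ratio, and conclude that this ratio is at least $p^\star$. This is the argument of~\cite{paccagnan2021optimal}, adapted as in~\cite{singh2025worst} to carry the extra stability constraint.

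Fix $(G,T)$ with $G\in\mathcal{G}_{\mathcal{B}}^n(S,T)$. Choose $a^\Ne\in\aaa^\WNE$ and $a^\opt\in\aaa^\opt$ attaining the maxima in~\eqref{eq:Stability_Margin}, so that
\[
\sum_i \bigl[J_i(a^\opt_i,a^\Ne_{-i})-J_i(a^\Ne)\bigr]\geq S\,L(a^\Ne).
\]
First normalize: dividing every coefficient $\alpha_{e,j}$ by $L(a^\Ne)>0$ preserves the equilibrium set, the optimal set, and all cost ratios. This holds because the tolls are linear by~\eqref{eq:Toll_cost}, so costs and tolls scale together. After normalizing, $L(a^\Ne)=1$. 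If $L(a^\Ne)=0$ the ratio is trivial; the paper assumes positive cost anyway.

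Next, label each resource $e$ by a profile $t^e$. Set $x_i=1$ if $e\in a_i^\Ne\cap a_i^\opt$, $y_i=1$ if $e\in a^\Ne_i\setminus a^\opt_i$, and $z_i=1$ if $e\in a^\opt_i\setminus a^\Ne_i$. Then $|a^\Ne|_e=Y_{t^e}$ and $|a^\opt|_e=Z_{t^e}$. Resources used by nobody are dropped, since they contribute nothing. Define
\[
\theta(t,j)\coloneqq\sum_{e:t^e=t}\alpha_{e,j}\geq 0.
\]
With this choice, $L(a^\Ne)=\sum_{(t,j)} Y_tb_j(Y_t)\theta(t,j)=1$, which gives~\eqref{eq:NORM_cond_S}. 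Likewise $L(a^\opt)$ equals the objective in~\eqref{eq:OPT_cond_S}.

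Now verify the two inequality constraints by decomposing each agent's unilateral deviation resource by resource.
\begin{itemize}
\item On resources with $x_i=1$, agent $i$ is present before and after, so the terms cancel.
\item On resources with $y_i=1$, agent $i$ leaves, losing $f(Y_t)$.
\item On resources with $z_i=1$, agent $i$ joins, paying $f(Y_t+1)$.
\end{itemize}
Hence
\[
J_i(a^\Ne)-J_i(a^\opt_i,a^\Ne_{-i})=\sum_{(t,j)}\bigl[y_if_j(Y_t)-z_if_j(Y_t+1)\bigr]\theta(t,j).
\]
The Nash condition~\eqref{eq:Nash_equilibrium} makes each such term $\leq 0$, which is~\eqref{eq:NE_cond_S}. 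Summing over $i$ and using the stability inequality with $L(a^\Ne)=1$ gives~\eqref{eq:STAB_cond_S}.

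The main point needing care is the boundary convention. When $z_i=1$ we have $Y_t+1\leq n$, because agent $i$ is not counted in $Y_t$. So $f_j(n+1)=0$ is never actually invoked with a nonzero coefficient, and $b_j(0)=0$ correctly handles $Y_t=0$ or $Z_t=0$. I also need to check that each $t^e$ satisfies the membership conditions of $\mathcal{I}$, which holds since every retained resource is used by at least one agent.

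The constructed $\theta$ is therefore feasible, so $L(a^\opt)\geq p^\star$ and
\[
\frac{L(a^\Ne)}{L(a^\opt)}=\frac{1}{L(a^\opt)}\leq\frac{1}{p^\star}.
\]
Since $a^\Ne$ is a worst-case equilibrium and $a^\opt$ is optimal, this ratio is exactly the game's PoA ratio. Taking the supremum over the class yields~\eqref{eq:PoA_S_UB}. If $p^\star=0$ the bound is vacuous, and if the LP is infeasible the class contains no games and the bound holds trivially.
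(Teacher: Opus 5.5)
Your proposal is correct and is essentially the paper's argument, namely its Steps 2 and 3: rescale all coefficients by $1/L(a^\Ne)$ using linearity of $T$, aggregate $\theta(t,j)=\sum_{e\in E(t)}\alpha_{e,j}$, and read off the equilibrium, stability, and normalization constraints from the resource-by-resource decomposition of each unilateral deviation. The one difference is that you skip the paper's Step 1, the reduction to two-action games applied to a game assumed to attain the supremum. This is harmless and arguably cleaner, since the LP only encodes each agent's single deviation $a_i^\Ne \to a_i^\opt$. Your direct ``every game yields a feasible point'' argument therefore needs neither attainment of the supremum nor a check that restricting action sets preserves the worst-case equilibrium and the stability margin.
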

\vspace{2mm}

We provide a brief interpretation before presenting the proof of Theorem~\ref{thm:LP_Payoff_Stability}.
For a class of games $\mathcal{G}_{\mathcal{B}}^n(S,T)$, the LP in~\eqref{eq:LP_S} generates an upper bound on $\PoA \leftparen \mathcal{G}_{\mathcal{B}}^n(S,T) \rightparen$ that is equivalent to upper-bounding the solution to the following:
\begin{equation}
    \label{eq:Optimization_Prob}
    \sup_{(G,T) \in \mathcal{G}_\mathcal{B}^n(S, T)} \frac{\max_{a^\Ne \in \aaa^\NE} L(a^\Ne)}{\min_{a^\ast \in \aaa} L(a^\ast)}.
\end{equation}
Specifically, the normalization constraint in~\eqref{eq:NORM_cond_S} sets the system cost at a worst-case equilibrium to one. As a result, minimizing the LP objective in~\eqref{eq:OPT_cond_S}, which represents the system cost of an optimal allocation, is equivalent to maximizing the ratio in~\eqref{eq:Optimization_Prob}. The constraints in~\eqref{eq:NE_cond_S} and~\eqref{eq:STAB_cond_S} enforce the equilibrium and stability-margin conditions, respectively, while~\eqref{eq:COEFF_cond_S} parameterizes the coefficients of the resource-cost functions in a worst-case game.

Tuning $S$ decides which game instances from $\mathcal{G}_{\mathcal{B}}^n (S,T)$ are encoded as variables in the LP.
As $S$ increases, fewer games satisfy the stability-margin constraint in~\eqref{eq:STAB_cond_S}, so no new games arise in $\mathcal{G}_{\mathcal{B}}^n(S,T)$ relative to $\mathcal{G}_{\mathcal{B}}^n(0,T)$.
Thus, the bound on $\PoA \leftparen \mathcal{G}_{\mathcal{B}}^n (S, T) \rightparen$ is nonincreasing as a function of~$S$.
However, by performing a parameter sweep on $S$, we can explore the tradeoff between the PoA and the stability margin for specific tolling mechanisms, and determine that the bound the LP provides for $\PoA \leftparen \mathcal{G}_{\mathcal{B}}^n (S, T) \rightparen$ is often \emph{strictly} decreasing as a function of $S$.
Numerical experiments that illustrate these results are presented in Section~\ref{sec:Experiments}.
We now proceed with the proof of Theorem~\ref{thm:LP_Payoff_Stability}.

Theorem~\ref{thm:LP_Payoff_Stability} extends~\cite[Theorem 1]{paccagnan2021optimal}, where some parts of our proof align with their proof.
For clarity, we provide our complete proof, which is completed in three steps:
\begin{enumerate}
    \item First, we observe that the PoA over a class of games $\mathcal{G}_{\mathcal{B}}^n (S, T)$ is the same as the PoA over the reduced class of games $\bar{\mathcal{G}}_{\mathcal{B}}^n (S,T)$, where the action set of every agent is reduced to an action set containing only two actions.
    
    \item Next, we show $\PoA \leftparen \bar{\mathcal{G}}_{\mathcal{B}}^n (S, T) \rightparen$ is unchanged when the system cost is constrained to be equal to $1$ at a worst-case equilibrium of any $(\bar{G},T) \in \bar{\mathcal{G}}_{\mathcal{B}}^n (S, T)$; i.e., \mbox{$L(a^\Ne) = 1$} for any \mbox{$a^\Ne \in \arg\max_{a \in \aaa^\NE(\bar{G},T)} L(a)$}.

    \item Finally, we show how to parameterize the LP by introducing the variables $\theta(t, j)$, where $(t, j) \in \mathcal{I}$. 
\end{enumerate}

\textit{Proof of Theorem~\ref{thm:LP_Payoff_Stability}:}
Let $\mathcal{G}_{\mathcal{B}}^n (S, T)$ be a class of atomic congestion games with stability margin no less than \mbox{$S \geq 0$} and tolling mechanism $T$.
Define $\bar{\mathcal{G}}_{\mathcal{B}}^n(S, T)$ to be the reduced class of games where each agent has access to two actions.

\textit{Step 1:}
We first show that  
\begin{equation}
    \label{eq:PoA_reduced}
    \PoA \leftparen \mathcal{G}_{\mathcal{B}}^n (S, T) \rightparen = \PoA \leftparen \bar{\mathcal{G}}_{\mathcal{B}}^n (S, T) \rightparen.  
\end{equation}
Consider a game $(G,T) \in \mathcal{G}_{\mathcal{B}}^n (S, T)$ such that $(G,T)$ achieves the PoA bounds for $\mathcal{G}_{\mathcal{B}}^n (S, T)$.
That is, there exists $a^\Ne \in \aaa^\NE(G,T)$ and $a^\opt \in \aaa^\opt$ such that 
\begin{equation}
    \label{eq:proof_1}
    \frac{L(a^\Ne)}{L(a^\opt)} = \PoA \leftparen \mathcal{G}_{\mathcal{B}}^n (S, T) \rightparen.
\end{equation}
Now, construct game $\bar{G}$ to be identical to $G$ in everything but the action sets.
For all $i \in [n]$, the action sets of $\bar{G}$ are  \mbox{$\bar{\aaa}_i \coloneqq \{a^\Ne_i, a^\opt_i\} \subseteq \aaa_i$}, so the action set for each agent is restricted to their action in $a^\Ne$ and $a^\opt$ only.
Since actions were only removed from $G$ to $\bar{G}$, the cost functions $\{J_i\}_{i \in [n]}$ are well-defined and attain equivalent values for $\bar{\aaa}$ in $\bar{G}$ as they do for $\bar{\aaa}$ in $G$.
Thus, $a^\Ne$ remains an equilibrium allocation and $a^\opt$ remains an optimal allocation in $\bar{G}$.
Further, since each agent in $\bar{G}$ is restricted to their actions in $a^\Ne$ and $a^\opt$, $\mathcal{S}(\bar{G}, T) \geq S$ and $(\bar{G}, T) \in \bar{\mathcal{G}}_{\mathcal{B}}^n (S, T)$.
Thus,
\begin{equation}
    \label{eq:proof_2}
    \PoA \leftparen \bar{\mathcal{G}}_{\mathcal{B}}^n (S, T) \rightparen \geq \frac{L(a^\Ne)}{L(a^\opt)}.
\end{equation}
Since $\bar{\mathcal{G}}_{\mathcal{B}}^n(S, T)$ is defined by games in $\mathcal{G}_{\mathcal{B}}^n(S, T)$ that are restricted by their action set, it follows that \mbox{$\bar{\mathcal{G}}_{\mathcal{B}}^n(S, T) \subseteq \mathcal{G}_{\mathcal{B}}^n(S, T)$}, so that 
\begin{equation}
    \label{eq:proof_3}
    \PoA \leftparen \bar{\mathcal{G}}_{\mathcal{B}}^n (S, T) \rightparen \leq \PoA \leftparen \mathcal{G}_{\mathcal{B}}^n (S, T) \rightparen.
\end{equation}
Combining~\eqref{eq:proof_1},~\eqref{eq:proof_2}, and~\eqref{eq:proof_3} implies that~\eqref{eq:PoA_reduced} holds.

\textit{Step 2:}
Next, for any game $(\bar{G},T) \in \bar{\mathcal{G}}_{\mathcal{B}}^n (S, T)$ that achieves the PoA bound, we construct a corresponding game $\bar{G}^\prime$ in the following manner.
For each resource $e \in E$, define the local cost in $\bar{G}^\prime$ to be $\ell_e(\cdot)/L(a^\Ne)$.
Since $T$ is linear, the corresponding toll in $\bar{G}^\prime$ is $\tau_e(\cdot)/L(a^\Ne)$, so the cost experienced by every agent is scaled by the same positive constant $1/L(a^\Ne)$. Hence, $a^\Ne$ remains a worst-case Nash equilibrium in $\bar{G}^\prime$, $a^\opt$ remains an optimal allocation in $\bar{G}^\prime$, and $\mathcal{S}(\bar{G}^\prime,T)=\mathcal{S}(\bar{G},T)\geq S$.
Thus, $a^\Ne$ generates a system cost of $L(a^\Ne) / L(a^\Ne) = 1$ for $\bar{G}^\prime$.
Since this procedure scales the system cost for all allocations by the same coefficient, $a^\Ne$ remains the worst-case Nash equilibrium and $a^\opt$ remains an optimal allocation for $\bar{G}^\prime$.
Thus, $(\bar{G}^\prime, T)$ attains the PoA bound for the class of games $\bar{\mathcal{G}}_{\mathcal{B}}^n (S, T)$.

\textit{Step 3:}
By \textit{Step 1} and \textit{Step 2}, $\PoA \leftparen \mathcal{G}_{\mathcal{B}}^n (S, T) \rightparen$ is upper-bounded by the reciprocal of the following problem:
\begin{align}
    \nonumber
    \inf_{(\bar{G},T) \in \bar{\mathcal{G}}_{\mathcal{B}}^n (S, T)} &L(a^\opt)
    \\ \nonumber
    \text{s.t.} \quad &J_i(a_i^\Ne, a_{-i}^\Ne) \leq J_i(a_i^\opt, a_{-i}^\Ne) \; \forall i \in [n],
    \\ \nonumber
    &S \leq \frac{\sum_{i \in [n]} J_i(a^\opt_i, a^\Ne_{-i}) - J_i(a^\Ne)}{L(a^\Ne)}
    \\ \label{eq:L_star_opt}
    &L(a^\Ne) = 1.
\end{align}
Thus, our final step is to show that $p^\star$, the value of the LP in~\eqref{eq:LP_S}, is the solution to~\eqref{eq:L_star_opt}.
To do so, we introduce the variables $\theta(t,j)$, where $(t,j) \in \mathcal{I}$.
Recall, for all $i \in [n]$, the feasible set of actions is $\bar{\aaa}_i = \{a^\Ne_i, a^\opt_i \}$.
Let
\begin{align}
    \nonumber
    x_{e,i} &\coloneqq \vert \{ e \in a_i^\Ne \cap a_i^\opt \} \vert,
    \\ \nonumber
    y_{e,i} &\coloneqq \vert \{ e \in a_i^\Ne \setminus a_i^\opt \} \vert,
    \\ \nonumber
    z_{e,i} &\coloneqq \vert \{ e \in a_i^\opt \setminus a_i^\Ne \} \vert,
\end{align}
where $\vert \cdot \vert$ denotes cardinality.
Thus, for each tuple $(t,j) \in \mathcal{I}$, where $t = (t_1, \ldots, t_n)$, and $t_i = (x_i, y_i, z_i)$ for $i \in [n]$, we may write $x_{e,i} = x_i$, $y_{e,i} = y_i$, and $z_{e,i} = z_i$.
Define $E(t)$ to be the set containing all resources $e \in E$ that are selected by $Y_t = \sum_{i \in [n]} (x_i + y_i)$ agents at the equilibrium and selected by $Z_t = \sum_{i \in [n]} (x_i + z_i)$ agents at the optimum.
Hence,
\begin{equation}
    E(t) \coloneqq \leftcurly
    e \in E :
    \sum_{i \in [n]} (x_i + y_i) = Y_t,
    \sum_{i \in [n]} (x_i + z_i) = Z_t
    \rightcurly.
\end{equation}
Then, for each $(t, j) \in \mathcal{I}$, we define $\theta(t,j) \geq 0$ to be the sum of the coefficients for the resource costs in $E(t)$ and for basis function $b_j \in \mathcal{B}$, i.e., $\theta(t,j) \coloneqq \sum_{e \in E(t)} \alpha_{e,j}$.
Thus, using the notation just introduced, we can rewrite the system cost at $a^\opt$ and $a^\Ne$ in~\eqref{eq:L_star_opt}, respectively,
\begin{align}
    \label{eq:transform_1}
    L(a^\opt) &= \sum_{e \in E} \vert a^\opt \vert_e \ell_e(\vert a^\opt \vert_e)
    = \sum_{(t, j) \in \mathcal{I}} Z_t b_j(Z_t)\theta(t, j),
    \\ \label{eq:transform_2}
    L(a^\Ne) &= \sum_{e \in E} \vert a^\Ne \vert_e \ell_e(\vert a^\Ne \vert_e)
    = \sum_{(t, j) \in \mathcal{I}} Y_t b_j(Y_t) \theta(t, j).
\end{align}
Now, for each $i \in [n]$, the individual user cost difference $\Delta_i \coloneqq J_i(a_i^\Ne, a_{-i}^\Ne) - J_i(a_i^\opt, a_{-i}^\Ne)$, becomes
\begin{align}
    \nonumber
    \Delta_J = &\sum_{e \in a_i^\Ne} \leftparen \ell_e(\vert a^\Ne \vert_e) + \tau_e(\vert a^\Ne \vert_e) \rightparen 
    \\ \nonumber
    &- \sum_{e \in a_i^\opt} \leftparen \ell_e(\vert a_{-i}^\Ne\vert_e + 1) + \tau_e(\vert a_{-i}^\Ne \vert_e + 1) \rightparen
    \\ \label{eq:transform_3}
    = &\sum_{(t, j) \in \mathcal{I}} \leftbracket y_if_j(Y_t) \rightbracket \theta(t, j) - \hspace{-2mm}\sum_{(t, j) \in \mathcal{I}} \leftbracket z_i f_j(Y_t+1) \rightbracket \theta(t, j).
\end{align}
By substituting the values in~\eqref{eq:transform_1}--\eqref{eq:transform_3} appropriately into~\eqref{eq:L_star_opt} and restricting $\theta(t,j)$ to be nonnegative for all $(t,j) \in \mathcal{I}$, the variables $\theta(t,j)$ are sufficient to characterize the upper bound on $\PoA \leftparen \mathcal{G}_{\mathcal{B}}^n (S,T) \rightparen$ via the LP in~\eqref{eq:LP_S}. \hfill $\blacksquare$

\subsection{Individual Agents are Indifferent Between Worst-Case Equilibria and Optimal Outcomes}

A system designer may ask how satisfied agents are with their actions when the PoA is unconstrained by the stability margin.
Theorem~\ref{thm:Knife_Edge} shows that in games attaining the worst-case PoA bound, every agent is indifferent between their equilibrium action and their action in an optimal allocation.

\begin{theorem}
    \label{thm:Knife_Edge}
    Consider class of incentivized atomic congestion games $\mathcal{G}_{\mathcal{B}}^n(0, T)$.
    Let \mbox{$(G,T) \in \mathcal{G}_{\mathcal{B}}^n(0, T)$} be any game that achieves $\PoA \leftparen \mathcal{G}_\mathcal{B}^n(0, T) \rightparen$.
    That is,
    \begin{equation}
        \label{eq:PoA_worst_case}
        \frac{L(a^\Ne)}{L(a^\opt)} = \PoA \leftparen \mathcal{G}_\mathcal{B}^n(0, T) \rightparen,
    \end{equation}
    for some Nash equilibrium \mbox{$a^\Ne \in \aaa^\NE$} and optimal allocation $a^\opt \in \aaa^\opt$ of $G$. 
    Then, the following holds:
    \begin{equation}
        \label{eq:Knife_Edge}
        J_i(a_i^\Ne, a_{-i}^\Ne) = J_i(a_i^\opt, a_{-i}^\Ne) \quad \forall i \in [n].
    \end{equation}
    Moreover, the stability margin satisfies $\mathcal{S} (G,T) = 0$.
\end{theorem}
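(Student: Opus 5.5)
The plan is to argue by contradiction using a local perturbation of the worst-case game, in the spirit of~\cite{singh2025worst}. Suppose $(G,T)$ attains $\PoA(\mathcal{G}_{\mathcal{B}}^n(0,T))$ via $a^\Ne$ and $a^\opt$ as in~\eqref{eq:PoA_worst_case}. Suppose also that~\eqref{eq:Knife_Edge} fails for some agent $k$. By the equilibrium condition~\eqref{eq:Nash_equilibrium}, this agent then has strict slack
\[
\delta \coloneqq J_k(a_k^\opt, a_{-k}^\Ne) - J_k(a_k^\Ne, a_{-k}^\Ne) > 0 .
\]
I will build a new game $G' \in \mathcal{G}_{\mathcal{B}}^n$ in which $a^\Ne$ is still a Nash equilibrium, the optimal cost is unchanged, and the equilibrium cost is strictly larger. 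This contradicts the maximality of the PoA.

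The construction adds one fresh resource $e'$ to $E$ and appends it only to agent $k$'s equilibrium action, so $a_k^\Ne \mapsto a_k^\Ne \cup \{e'\}$, with all other actions unchanged. I choose a basis function $b_j$ with $b_j(1) > 0$ and set $\ell_{e'} = c\, b_j$ for a small $c>0$. By linearity of $T$ the toll is then $\tau_{e'} = c\,\tau_j^\star$. I will then verify the following facts in turn.
\begin{enumerate}
\item Resource $e'$ is used only by agent $k$, and only in $a^\Ne$. Hence every other agent's cost at $a^\Ne$ is unchanged, and so is every other agent's cost after deviating to its action in $a^\opt$, since $e'$ does not appear in those actions either.
\item Agent $k$'s cost at $a^\Ne$ rises by $c f_j(1)$, while its deviation cost $J_k(a_k^\opt, a^\Ne_{-k})$ is unchanged. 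Choosing $c f_j(1) < \delta$ therefore keeps $a^\Ne$ an equilibrium.
\item For deviations to other actions in $\aaa_k$ I first pass to the reduced two-action game $\bar{G}$ from Step~1 of the proof of Theorem~\ref{thm:LP_Payoff_Stability}. That step shows that restricting to $\bar{G}$ loses nothing, so only the deviation to $a_k^\opt$ has to be checked, which item~2 already does.
\item Every allocation's system cost weakly increases, while $L(a^\opt)$ is unchanged because $e' \notin a_k^\opt$. Hence $a^\opt$ remains optimal.
\item $L(a^\Ne)$ increases by $c\, b_j(1) > 0$.
\end{enumerate}
Together these give
\[
\frac{\max_{a\in\aaa^\NE(G',T)} L(a)}{\min_{a} L(a)} \;\ge\; \frac{L(a^\Ne)+c\,b_j(1)}{L(a^\opt)} \;>\; \PoA\big(\mathcal{G}_\mathcal{B}^n(0,T)\big),
\]
which is the desired contradiction. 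Hence~\eqref{eq:Knife_Edge} holds for all $i$.

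The claim $\mathcal{S}(G,T)=0$ then follows by summing~\eqref{eq:Knife_Edge} over $i\in[n]$, which makes the numerator of~\eqref{eq:Stability_Margin} vanish for this pair $(a^\Ne,a^\opt)$. There is one subtlety: \eqref{eq:Stability_Margin} maximizes over all worst-case equilibria and all optima. But any other worst-case $\tilde a \in \aaa^\WNE$ and optimum $\tilde a^\opt$ also attain the PoA ratio, so the same argument applied to that pair gives zero there too. The margin is therefore $0$.

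I expect the main obstacle to be the degenerate case in which every basis function has $b_j(1)=0$, so a single-user resource cannot raise the equilibrium cost. My first fallback is to place $e'$ in the equilibrium actions of agent $k$ together with a set of agents who also have slack, at a congestion level $Y$ where $b_j(Y)>0$. If that fails, I will instead scale up, by a factor $1+\epsilon$, the coefficients of resources in $a_k^\Ne \setminus a_k^\opt$ that are used by no other agent at $a^\opt$. It is also possible to phrase the whole argument through the LP~\eqref{eq:LP_S} with $S=0$ (exact by~\cite{paccagnan2021optimal}). There I would shift a small amount of mass onto a variable $\theta(t,j)$ whose index has $y_k=1$ and all other entries zero, using the slack in constraint~\eqref{eq:NE_cond_S} for agent $k$, and then renormalize. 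This lowers the objective $p^\star$ and gives the same contradiction.
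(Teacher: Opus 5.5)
Your proof is correct, but it takes a different route from the paper's.

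\textbf{The paper's route.} The paper argues entirely inside the LP. It takes the dual of~\eqref{eq:LP} and notes that the dual constraint indexed by the tuple with $y_i=1$ (all other entries zero) reads $\lambda_i f_j(1)\ge \mu\, b_j(1)>0$. So every dual multiplier is strictly positive, and complementary slackness makes every constraint in~\eqref{eq:NE_cond} tight at every primal optimum. The two-action reduction of a PoA-attaining game is then identified with such an optimum.

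\textbf{How the two relate.} Your private-resource perturbation is exactly the primal counterpart of that dual constraint. Adding $e'$ to $a_k^\Ne$ alone amounts to putting mass on $\theta(t,j)$ with $y_k=1$ and everything else zero.

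\textbf{What each buys.} Your route is self-contained. You argue directly from the supremum in~\eqref{eq:PoA}, so you do not need the imported fact that the LP value is exactly $1/\PoA$, which the paper needs in order to treat the worst-case game as an LP optimum. You also avoid strong duality and the positivity of the dual value. You handle the fact that~\eqref{eq:Stability_Margin} ranges over all of $\aaa^\WNE\times\aaa^\opt$ more carefully than the paper does. The paper's route, in exchange, gives a structural statement about the LP itself (tightness at every optimal solution), which is what ties the theorem to the numerical bounds.

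\textbf{The degenerate case.} The paper simply asserts $b_j(1)>0$, so you are not behind it there. Still, your fallbacks as written do not close the case where all $b_j(1)=0$:
\begin{itemize}
\item the first needs other agents with slack, which may not exist;
\item the second can break the equilibrium condition of agents who use the scaled resources at $a^\Ne$;
\item the third is your main construction in LP form and also needs $b_j(1)>0$.
\end{itemize}
A clean fix is as follows. Let $p$ be the least integer with $b_j(p)>0$ for some $j$; then $p\le n$, since otherwise all costs vanish. Put $e'$ into $a_k^\Ne$, and into both $a_l^\Ne$ and $a_l^\opt$ for $p-1$ agents $l\neq k$.
\begin{itemize}
\item Each such $l$ sees congestion $p$ on $e'$ both at $a^\Ne$ and after deviating, so its slack is unchanged.
\item Agent $k$ loses only $c f_j(p)$ of slack.
\item In the reduced game $e'$ always carries $p-1$ or $p$ users, so every allocation's cost rises by $0$ or by $c\,p\,b_j(p)$.
\end{itemize}
Hence $L(a^\opt)$ is unchanged and $L(a^\Ne)$ rises strictly, which gives your contradiction in this case too.
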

\vspace{2mm}

Before providing the proof of Theorem~\ref{thm:Knife_Edge}, we provide a brief discussion and supporting material.
Intuitively, for a game $(G,T)$ that attains the worst-case bound in~\eqref{eq:PoA_worst_case}, any agent can deviate from their action at equilibrium to their action in an optimal allocation without increasing their individual cost.
Since the equilibrium and optimal allocations are arbitrary, it follows that $\mathcal{S} (G,T) = 0$; however, this result is not trivial as $(G,T) \in \mathcal{G}_\mathcal{B}^n(0, T)$ only implies $\mathcal{S} (G,T) \geq 0$.

Theorem~\ref{thm:Knife_Edge} also has real-world implications, as it appears that fine-tuning is required in order for games with worst-case efficiency to arise, meaning their occurrence is unlikely.
Further, even in the presence of such games, worst-case equilibria will be easy to avoid by allowing agents to randomize over their actions at equilibria.
Next, for the class of incentivized atomic congestion games $\mathcal{G}_{\mathcal{B}}^n(0, T)$, Lemma~\ref{lem:OPT_NE_equality} shows that the constraint in~\eqref{eq:NE_cond_S} holds with equality.

\begin{lemma}
    \label{lem:OPT_NE_equality}
    Let $\mathcal{G}_{\mathcal{B}}^n(0, T)$ be a class of atomic congestion games.
    In any optimal solution to the LP in~\eqref{eq:LP_S} over $\mathcal{G}_{\mathcal{B}}^n(0, T)$, the constraint in~\eqref{eq:NE_cond_S} is tight for every \mbox{$i \in [n]$}.
    
    \begin{proof}
        When $S=0$, any solution that satisfies the constraint in~\eqref{eq:NE_cond_S} immediately satisfies the constraint in~\eqref{eq:STAB_cond_S}.
        Thus, the LP in~\eqref{eq:LP_S} can be rewritten as
        \begin{equation}
            \label{eq:PoA_tight_bound}
            \PoA \leftparen \mathcal{G}_{\mathcal{B}}^n(0, T) \rightparen = 1/q^\star,
        \end{equation}
        where $q^\star$ is the value of the following LP:
        \begin{subequations}
            \label{eq:LP}
            \begin{align}
                \label{eq:OPT_cond}
                q^\star = \min_{\theta(t, j)} \quad &\sum_{(t, j) \in \mathcal{I}} Z_t b_j(Z_t) \theta(t, j)
                \\ \label{eq:NE_cond}
                \text{s.t.}
                \quad \sum_{(t, j) \in \mathcal{I}} &\leftbracket y_if_j(Y_t) - z_if_j(Y_t+1) \rightbracket \theta(t, j) \leq 0 \; \forall i \in [n],
                \\ \label{eq:NORM_cond}
                &\sum_{(t, j) \in \mathcal{I}} Y_t b_j(Y_t) \theta(t, j)=1,
                \\ \label{eq:COEFF_cond}
                & \theta(t, j) \geq 0 \quad \forall (t, j) \in \mathcal{I},
            \end{align}
        \end{subequations}
        where $f_j \coloneqq b_j + \tau^\star_j : \mathbb{N} \to \mathbb{R}_{\geq 0}$, and for convenience, we define $b_j(0) = f_j(0) = f_j(n+1) = 0$ for each $j \in [m]$.
        \vspace{2mm}
        \\Thus, the LP in~\eqref{eq:LP} determines the PoA for a class of games that is unrestricted by the stability margin.
        It is also known that~\eqref{eq:PoA_tight_bound} holds with equality (for a formal proof, see~\cite[Theorem 2, Lemma 3]{paccagnan2019utility}).
        Now, the dual of the LP in~\eqref{eq:LP} is
        \begin{align}
            \label{eq:LP_dual}
            &d^\star = \max_{\lambda_1, \ldots \lambda_n \geq 0, \mu \in \mathbb{R}} \mu \quad \text{subject to}
            \\ \nonumber
            &Z_t b_j(Z_t) + \sum_{i \in [n]} \lambda_i \leftparen y_i f_j(Y_t) - z_i f_j(Y_t+1) \rightparen \geq \mu Y_t b_j(Y_t)
            \\ \nonumber
            &\forall (t,j)\in\mathcal{I},
        \end{align}
        and $b_j(0) = f_j(0) = f_j(n+1) = 0$ for each $j \in [m]$.
        \vspace{2mm}
        \\ Next, fix any $j \in [m]$ and $i \in [n]$, and consider the case that $y_i = 1$ and $x_i = z_i = x_l = y_l = z_l = 0$ for all $l \neq i$.
        Then, the constraint in the dual formulation in~\eqref{eq:LP_dual} reduces to
        \begin{equation}
            \label{eq:OPT_NE_equality_part_1}
            \lambda_i f_j(1) \geq \mu b_j(1) > 0,
        \end{equation}
        where the strict inequality in~\eqref{eq:OPT_NE_equality_part_1} follows from strong duality of linear programs~\cite{boyd2004convex}, as $d^\star = q^\star > 0$ (where $q^\star$ is given by~\eqref{eq:LP}) and $b_j(1)$ is strictly positive (otherwise, it can be shown that $\PoA \leftparen \mathcal{G}_{\mathcal{B}}^n(0, T) \rightparen \leq 0$, a contradiction), so it must be the case that $\mu b_j(1) > 0$.
        It also follows that $f_j(1) > 0$ since \mbox{$f_j(1) \geq b_j(1)$}, so any feasible $\lambda_i > 0$.
        Now, at the dual optimum, $\lambda^\star_i > 0$ for all \mbox{$i \in [n]$}.
        Hence, by complementary slackness~\cite{boyd2004convex}, the corresponding primal constraints in~\eqref{eq:NE_cond} are tight at the primal optimum.
        Thus, if $\theta^\star(t,j)$ is the optimal solution to the LP in~\eqref{eq:LP}, then
        $\sum_{(t, j) \in \mathcal{I}} \leftbracket y_if_j(Y_t) - z_if_j(Y_t+1) \rightbracket \theta^\star(t, j) = 0 \; \forall i \in [n]$.
    \end{proof}
\end{lemma}
\vspace{2mm}

\textit{Proof of Theorem~\ref{thm:Knife_Edge}:}
Recall, $(G,T) \in \mathcal{G}_{\mathcal{B}}^n(0, T)$ attains $\PoA \leftparen \mathcal{G}_\mathcal{B}^n(0, T) \rightparen$, so there exist \mbox{$a^\Ne \in \aaa^\NE$} and \mbox{$a^\opt \in \aaa^\opt$} such that~\eqref{eq:PoA_worst_case} holds.
If each agent has no more than two available actions, then~\eqref{eq:Knife_Edge} holds trivially or by Lemma~\ref{lem:OPT_NE_equality}.
Thus, assume \mbox{$\vert \aaa_i \vert > 2$} for all \mbox{$i \in [n]$}.
Following the same reduction used in \textit{Step~1} of the proof of Theorem~\ref{thm:LP_Payoff_Stability}, construct game $\bar{G} \coloneqq \leftparen [n], E, \bar{\aaa}, \{\ell_e\}_{e \in E}\rightparen$ that coincides with $G$, except each agent's action set is restricted to their actions in $a^\Ne$ and $a^\opt$ in $\bar{G}$ only, i.e., \mbox{$\bar{\aaa}_i \coloneqq \{a^\Ne_i, a^\opt_i\} \subseteq \aaa_i$}.
By Lemma~\ref{lem:OPT_NE_equality}, since each agent has two actions in $\bar{G}$,~\eqref{eq:Knife_Edge} holds for $\bar{G}$.
Further, the cost functions are equivalent in $\bar{G}$ and $G$, and $a^\Ne$ is any Nash equilibrium and $a^\opt$ is any optimal allocation that achieves the PoA bound, so~\eqref{eq:Knife_Edge} also holds for $G$.
By summing~\eqref{eq:Knife_Edge} over $i \in  [n]$, $\mathcal{S}(G,T) = 0$. \hfill $\blacksquare$

\subsection{The PoA for Atomic Congestion Games without Tolls}
Finally, we demonstrate that in games without tolls, the LP in~\eqref{eq:LP_S} yields an upper bound on the PoA as a function of the stability margin that strictly improves upon known bounds for certain parameter regimes.
In an allocation \mbox{$a \in \aaa$}, denote the cost experienced by agent \mbox{$i \in [n]$} when no tolls are applied as $C_i(a) \coloneqq \sum_{e \in a_i} \ell_e(\vert a \vert_e)$.
Many results produce a common upper bound on the PoA for atomic congestion games when no tolls are applied (e.g.,~\cite{roughgarden2002bad, awerbuch2005price, perakis2007price, hoeksma2011price}); these methods can be unified by the notion of $(\lambda, \mu)$-smoothness~\cite{roughgardenIntrinsicRobustnessPrice2015}:
\begin{definition}
    \label{def:Smooth_Game}
    A cost-minimization game is $(\lambda, \mu)$-smooth if for every two allocations $a$ and $a^\prime$,
    \begin{equation}
        \sum_{i \in [n]} C_i(a^\prime_i, a_{-i}) \leq \lambda L(a^\prime) + \mu L(a).
    \end{equation}
\end{definition}
\vspace{2mm}

For any class of games $\mathcal{G}_\mathcal{B}^n(0, 0)$ that is characterized by Definition~\ref{def:Smooth_Game}, $\PoA \leftparen \mathcal{G}_\mathcal{B}^n(0, 0) \rightparen \leq \frac{\lambda}{1 - \mu}$~\cite{roughgardenIntrinsicRobustnessPrice2015}.
This bound was extended in~\cite{seaton2023intrinsic} to upper-bound the PoA for $\mathcal{G}_\mathcal{B}^n(S, 0))$:
\begin{equation}
    \label{eq:PoA_S_Seaton}
    \PoA \leftparen \mathcal{G}_\mathcal{B}^n(S, 0) \rightparen \leq \leftcurly 1, \frac{\lambda}{1 - \mu + S} \rightcurly.
\end{equation}
Our final result shows that, for certain classes of atomic congestion games with no tolls, the LP in~\eqref{eq:LP_S} of Theorem~\ref{thm:LP_Payoff_Stability} yields a strictly tighter PoA bound than~\eqref{eq:PoA_S_Seaton}.

\begin{proposition}
    \label{prop:PoA_S_LUB}
    For some $S \geq 0$, there exists a class of atomic congestion games $\mathcal{G}_{\mathcal{B}}^n(S, 0)$ for which \mbox{$P^\star \coloneqq 1/p^\star$}, where $p^\star$ is the optimal value of the LP in~\eqref{eq:LP_S} when applied to  $\mathcal{G}_{\mathcal{B}}^n (S, 0)$ (i.e., $\PoA \leftparen \mathcal{G}_{\mathcal{B}}^n (S, 0) \rightparen \leq P^\star$), satisfies
    \begin{equation}
        \label{eq:PoA_S_LUB}
        P^\star < \max \leftcurly 1, \frac{\lambda}{1 - \mu + S} \rightcurly.
    \end{equation}
\end{proposition}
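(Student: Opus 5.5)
This is an existence statement, so the plan is to exhibit one concrete class and verify the strict inequality for one value of $S$. Take $n=2$, no tolls ($\tau^\star_j\equiv 0$, so $f_j=b_j$), and the single affine-type basis $\mathcal{B}=\{b_1\}$ with $b_1(x)=x$, or, following Fig.~\ref{fig:Main_Results}, the cubic basis $b_1(x)=x^3$. For this class, use the tight smoothness parameters $(\lambda,\mu)$ from~\cite{roughgardenIntrinsicRobustnessPrice2015,seaton2023intrinsic}: $(5/3,1/3)$ for linear costs, giving $\lambda/(1-\mu)=5/2$, and the corresponding values for polynomial costs. With $n=2$ the index set $\mathcal{I}$ is small, with at most $3^2-1=8$ profiles $t$ per basis function. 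The LP in~\eqref{eq:LP_S} can therefore be written out explicitly, with one variable $\theta(t,1)$ per profile.

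The key observation is structural. The right-hand side of~\eqref{eq:PoA_S_Seaton} is a smooth hyperbola in $S$ that reaches $1$ only at $S^\dagger=\lambda+\mu-1$. The LP bound $1/p^\star$, by contrast, is the reciprocal of the value of a parametric LP in which $S$ enters only the right-hand side of~\eqref{eq:STAB_cond_S}. Hence $p^\star(S)$ is piecewise linear and concave in $S$. I would pick $S$ at or slightly beyond the point where the LP bound first hits $1$, which is the $\star$ in Fig.~\ref{fig:Main_Results}. Alternatively, I would pick $S$ strictly inside $(0,S^\dagger)$, where the smoothness bound is still strictly above $1$. It then suffices to show that $P^\star$ is strictly smaller there.

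The main step is certifying $P^\star$ at the chosen $S$, which requires a lower bound on $p^\star$. By weak duality, any feasible dual point of~\eqref{eq:LP_S} gives such a bound. That dual is~\eqref{eq:LP_dual} augmented with one extra multiplier $\nu\ge0$ for~\eqref{eq:STAB_cond_S}, which adds $\nu\sum_i(\ldots)$ to each constraint and adds $\nu S$ to the objective. So I will write down an explicit rational triple $(\lambda_1=\lambda_2,\nu,\mu)$ and check the finitely many constraints indexed by $(t,1)\in\mathcal{I}$. This yields $p^\star\ge\mu+\nu S$. The remaining check is $1/(\mu+\nu S)<\lambda/(1-\mu_{\rm sm}+S)$, where $\mu_{\rm sm}$ denotes the smoothness parameter. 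Concretely, for linear costs with $n=2$ the unconstrained PoA is already below $5/2$, so $S=0$ may already suffice. I would first try $S=0$ and compare $1/q^\star$, computed via the dual, against $\lambda/(1-\mu)$. If the bounds coincide there, I would move to $S$ near $S^\dagger$.

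I expect the main obstacle to be finding the dual certificate by hand and making sure the comparison constant $\lambda/(1-\mu+S)$ really uses the best-known smoothness pair for the chosen class, so that the strict improvement is genuine. To find the certificate, I would first solve the small LP numerically, read off the active constraints, and then solve those equalities exactly to obtain rational multipliers.
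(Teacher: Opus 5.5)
Your route is sound, and it is a more rigorous version of what the paper does. The paper supports the proposition only numerically: Fig.~\ref{fig:PoA_vs_Stability_No_Tolls} plots LP curves for $\mathcal{B}=\{1,\ldots,x^d\}$, $d\le5$, against \eqref{eq:PoA_S_Seaton}, with $(\lambda,\mu)$ from a grid search over \cite[Lemma 5.8]{aland2011exact}. You instead propose an exact weak-duality certificate on a tiny instance. Your dual is right: $\lambda_i$ and $\nu$ enter the constraints only through $\lambda_i+\nu$, and the objective is $\mu+\nu S$. Your first attempt already closes. For $n=2$, $b_1(x)=x$, $S=0$, the point $\lambda_1=\lambda_2=\mu=1/2$, $\nu=0$ is dual feasible, tight at the count profiles $(x,y,z)=(0,1,0),(0,2,0),(1,1,0),(0,1,1)$. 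Hence $P^\star\le2<5/2$. The paper also compares against the $n$-independent pairs of Aland et al., so this proves the statement as written. There are two slips. First, $p^\star(S)$ is convex, not concave, in $S$: it is a maximum of the affine functions $\mu+\nu S$ over the dual feasible set. Second, for $n=2$ there are $4^2-1=15$ profiles $t$, not $8$, because an agent may also be absent from a resource.

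Be aware, however, that this strict gap is purely an $n=2$ effect and has nothing to do with the stability margin. It also fails your own genuineness test. The pair $(2,0)$ is a valid smoothness pair for two-player linear games and gives $\max\{1,2/(1+S)\}$, which coincides with $1/p^\star$ on $[0,1]$. Moving toward $S^\dagger=1$ does not help: for $S>1$ the class is empty, since $(1,1)$-smoothness forces $\mathcal{S}(G,0)\le1$ for linear costs.

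The reason is structural. Symmetrize the dual, take $\lambda_i=0$ and $\nu=\Lambda$, and substitute $\lambda'=1/\Lambda$, $\mu'=1-\mu/\Lambda$. Then $1/p^\star(S)$ is the minimum of $\lambda'/(1-\mu'+S)$ over pairs satisfying $x\,b_j(Y)+z\,b_j(Y+1)\le\lambda' Z\,b_j(Z)+\mu' Y\,b_j(Y)$ for all agent counts with $x+y+z\le n$. So the LP beats a fixed pair only once a different pair becomes optimal before the fixed pair's bound reaches $1$. With linear costs and $n\ge3$ this never happens, because $(5/3,1/3)$ stays optimal on all of $[0,1]$.

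So take $d\ge2$ to get a genuine example. For quadratic costs, the pair $(115/19,7/19)$ attains $115/12$ at $S=0$. For $S>1/8$, the certificate $\lambda_i=0$, $\nu=7/23$, $\mu=2/23$ (valid for every $n$) gives $P^\star\le 23/(2+7S)<115/(12+19S)$. This is the $S$-driven improvement the paper's figures are meant to show. Likewise, your ``first hits $1$'' choice works for $d\ge2$ (e.g.\ $S=3$ for $d=2$) but not for $d=1$, where both bounds reach $1$ at $S=1$.
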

\vspace{2mm}

Proposition~\ref{prop:PoA_S_LUB} is proved by verifying~\eqref{eq:PoA_S_LUB} for specific classes of games, which is done in Section~\ref{ssec:PoA_NoTolls} and Fig.~\ref{fig:PoA_vs_Stability_No_Tolls}.

\section{Experimental Results}\label{sec:Experiments}

\subsection{Comparison of Tolling Mechanisms}

Here, we numerically evaluate the PoA bound provided by Theorem~\ref{thm:LP_Payoff_Stability}.
We solve the LP in~\eqref{eq:LP_S} for three linear local tolling mechanisms (in addition to the no-toll case) to illustrate that the PoA is strictly decreasing as a function of the stability margin.
The following tolls are considered:
\begin{enumerate}
    \item Optimal local tolls~\cite{paccagnan2021optimal} determine the congestion \textit{dependent} toll that minimizes $\PoA \leftparen \mathcal{G}_{\mathcal{B}}^n(0, T) \rightparen$.
    
    \item Optimal constant local tolls~\cite{paccagnan2021optimal} determine the congestion \textit{independent} toll that minimizes $\PoA \leftparen \mathcal{G}_{\mathcal{B}}^n(0, T) \rightparen$
    
    \item Marginal-cost tolls~\cite{sandholm2007pigouvian} charge each agent for the negative externality they impose on others utilizing the same resource.
    Thus, for $e \in E$ with $x \coloneqq \vert a \vert_e$, the toll \mbox{$\tau_e(x) = (x-1)\leftparen \ell_e(x) - \ell_e(x-1) \rightparen$} is applied.
\end{enumerate}

Fig.~\ref{fig:PoA_vs_Stability_Comparing_Tolls} plots the tradeoff between the PoA and the stability margin for these tolling mechanisms by sweeping $S$ from $0$ to $1$ and solving the LP in~\eqref{eq:LP_S} for each $S$ and $T$.
The LP is solved for games with cubic latency functions (i.e., \mbox{$\mathcal{B} = \{1, x, x^2, x^3\}$}).
For each tolling mechanism, the upper bound produced by the LP does not exceed the PoA bound reported in~\cite[Table 1]{paccagnan2021optimal} (shown here as the far-left side of the plot at \mbox{$S=0$}) because increasing the stability margin only restricts which games are encoded in the LP.
Because these tolls require no information about the stability margin, a system designer can apply them with no knowledge of agent satisfaction and still guarantee that efficiency strictly improves upon the worst-case PoA bounds (i.e., when \mbox{$S=0$}) as agents become more satisfied at equilibrium.

\begin{figure}[t]
    \vspace{2mm}
        \begin{center}
            \includegraphics[width=0.48\textwidth]{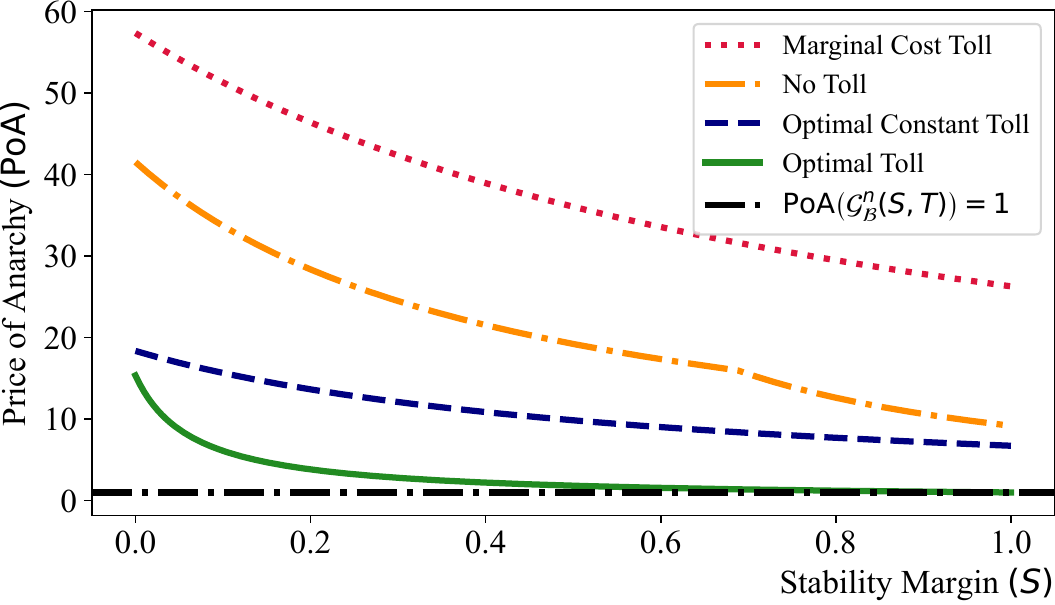}
        \end{center}
        \caption{Upper bound on the PoA from Theorem~\ref{thm:LP_Payoff_Stability} as a function of stability margin $S$ for various tolling mechanisms.
        We compare the optimal and marginal cost tolling mechanisms along with no tolls for the class of games $\mathcal{G}_{\mathcal{B}}^n(S, T)$, where \mbox{$\mathcal{B} = \{1, x, x^2, x^3\}$} and $n=8$.
        The bounds indicate that as the agents' aggregate satisfaction increases, the system cost at equilibrium becomes increasingly efficient.
        While $n$ is fixed here, the bounds can be applied to an arbitrary number of agents, as marginal cost tolls are independent of $n$ and optimal tolls are calculated by solving the LP for a fixed $n$ and are then extended analytically to any number of agents~\cite{paccagnan2021optimal}.
        }
        \label{fig:PoA_vs_Stability_Comparing_Tolls}
    \vspace{-4mm}
\end{figure}

\subsection{The PoA for Atomic Congestion Games without Tolls}
\label{ssec:PoA_NoTolls}

Fig.~\ref{fig:PoA_vs_Stability_No_Tolls} verifies Proposition~\ref{prop:PoA_S_LUB} by showing that for class of games $\mathcal{G}_\mathcal{B}^n(S, 0)$ with basis functions \mbox{$\mathcal{B} = \{1, \ldots, x^d\}$}, where \mbox{$d \in \{1, \ldots, 5\}$}, the bound produced by the LP in~\eqref{eq:LP_S} is strictly lower than the bound in~\eqref{eq:PoA_S_Seaton} for specific values of $S$.
When the LP in~\eqref{eq:LP_S} generates an upper bound on the PoA, it also implicitly generates game instances which achieve the bounds exactly, suggesting that the numerical bounds are tight.
We aim to address this claim in future work.

\begin{figure}[t]
    \vspace{2mm}
        \begin{center}
            \includegraphics[width=0.48\textwidth]{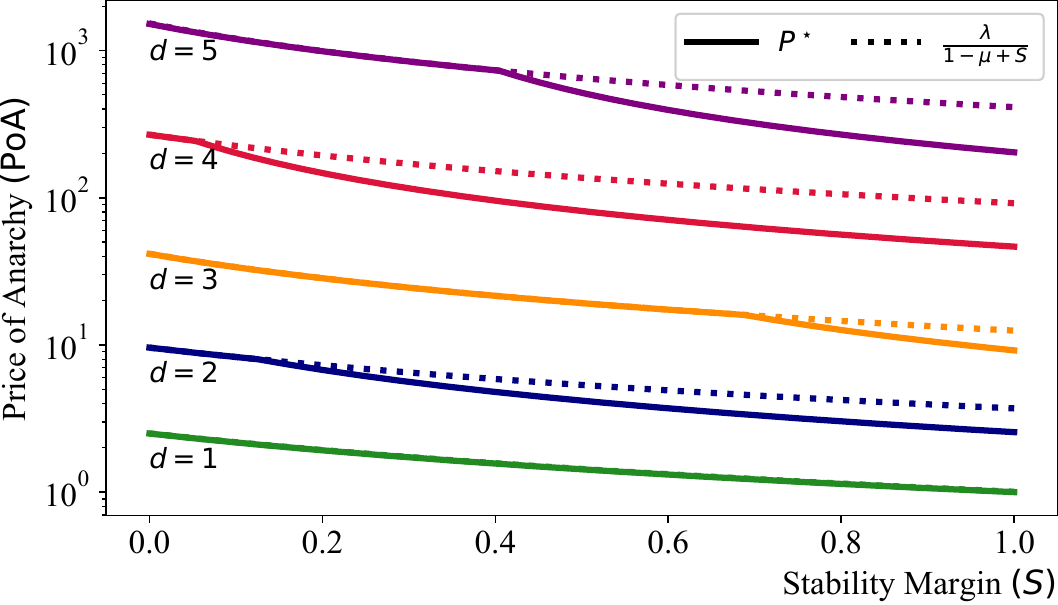}
        \end{center}
        \caption{Upper bound on the PoA as a function of stability margin $S$ when no toll is applied.
        For each class of atomic congestion games with polynomial basis functions of degree at most $d \in \{1, \ldots, 5\}$, the upper bound given by Theorem~\ref{thm:LP_Payoff_Stability} is compared to the previously best-known upper bound, given in equation~\eqref{eq:PoA_S_Seaton}.
        For each $d$, the plots demonstrate that $P^\star < \lambda / (1 - \mu + S)$ for certain values of $S$, proving Proposition~\ref{prop:PoA_S_LUB}.
        Here, the values of $\lambda$ and $\mu$ are generated using a grid search implementation of~\cite[Lemma 5.8]{aland2011exact}.
        }
        \label{fig:PoA_vs_Stability_No_Tolls}
    \vspace{-4mm}
\end{figure}

\section{Conclusion}
For incentivized atomic congestion games, we develop a linear program that computes an upper bound on the PoA as a function of the stability margin, which quantifies the aggregate satisfaction agents have with their equilibrium decisions.
We show that in games attaining the worst-case PoA, every agent is indifferent between their equilibrium action and their action in an optimal allocation; in particular, the stability margin is zero in these instances.
These results indicate that Nash equilibria achieving worst-case efficiency bounds are likely to be rare and may be avoided if agents randomize over cost-minimizing actions. Future work should determine whether a matching lower bound can be obtained so that the bound is tight, and extend the linear program to accommodate heterogeneous routing behavior.








\bibliographystyle{ieeetr}
\bibliography{library}

\end{document}